\newtheorem{theorem}{Theorem}[section]
\newenvironment{proof}[1][Proof]{\begin{trivlist}
\item[\hskip \labelsep {\bfseries #1}]}{\end{trivlist}}
\newcommand\qed{$\blacksquare$}
\DeclareMathOperator{\trace}{\mathrm{tr} \, }
\DeclareMathOperator{\diag}{\mathrm{diag} \, }
\newsavebox\myboxA
\newsavebox\myboxB
\newlength\mylenA
\newcommand{\oline}[1]{\mkern 1.5mu\overline{\mkern-1.5mu#1\mkern-1.5mu}\mkern 1.5mu}
\def\mA{\mbox{$\mathbf{A}$}}
\def\mB{\mbox{$\mathbf{B}$}}
\def\mBc{\mbox{$\oline{\mathbf{B}}$}}
\def\mD{\mbox{$\mathbf{D}$}}
\def\mDc{\mbox{$\oline{\mathbf{D}}$}}
\def\mG{\mbox{$\mathbf{G}$}}
\def\mI{\mbox{$\mathbf{I}$}}
\def\mL{\mbox{$\mathbf{L}$}}
\def\mU{\mbox{$\mathbf{U}$}}
\def\mSigma{\mbox{$\mathbf{\Sigma} \kern .08em$}}
\def\mLambda{\mbox{$\mathbf{\Lambda} \kern .08em$}}
\newcommand{\A}{{\cal A}}
\newcommand{\E}{{\cal E}}
\newcommand{\G}{{\cal G}}
\newcommand{\V}{{\cal V}}
\newcommand{\abs}[1]{\left\vert#1\right\vert}
\def\S{\text{\mbox{${\cal S}$}}}
\def\F{\text{\mbox{${\cal F}$}}}
\def\V{\text{\mbox{${\cal V}$}}}
\def\E{\text{\mbox{${\cal E}$}}}
\def\G{\text{\mbox{${\cal G}$}}}
\def\B{\text{\mbox{${\cal B}$}}}
\def\D{\text{\mbox{${\cal D}$}}}
\def\bG{\text{\mbox{\boldmath $G$}}}
\def\bL{\text{\mbox{\boldmath $L$}}}
\def\b0{\text{\mbox{\boldmath $0$}}}
\def\bff{\text{\mbox{\boldmath $f$}}}
\def\bffs{\text{\mbox{\boldmath $f_{\cal S}$}}}
\def\bn{\text{\mbox{\boldmath $n$}}}
\def\br{\text{\mbox{\boldmath $r$}}}
\def\bs{\text{\mbox{\boldmath $s$}}}
\def\bx{\text{\mbox{\boldmath $x$}}}
\def\buno{\text{\mbox{\boldmath $1$}}}
\def\bpsi{\text{\mbox{\boldmath $\psi$}}}
\begin{document}

\title{Uncertainty Principle and Sampling \\ of Signals Defined on Graphs}

%

\author{Mikhail~Tsitsvero$^1$, Sergio~Barbarossa$^1$, and Paolo Di Lorenzo$^2$\\
\small $^1$Department of Information Eng., Electronics and Telecommunications, Sapienza University of Rome,\\
\small $^2$Department of Engineering, University of Perugia, Via G. Duranti 93, 06125, Perugia, Italy,\\
\small E-mail: {\tt tsitsvero@gmail.com, sergio.barbarossa@uniroma1.it, paolo.dilorenzo@unipg.it}}

\maketitle

\begin{abstract}
In many applications of current interest, the observations are represented as a signal defined over
a graph. The analysis of such signals requires the extension of standard signal processing tools.
Building on the recently introduced Graph Fourier Transform, the first contribution of this paper is
to provide an uncertainty principle for signals on graph. As a by-product of this theory, we
show how to build a dictionary of maximally concentrated signals on vertex/frequency domains.
Then, we establish a direct relation between  uncertainty principle and sampling, which forms
the basis for a sampling theorem of signals defined on graph. Based on this theory, we show that, besides sampling rate,
the samples' location plays a key role in the performance of signal recovery algorithms. Hence, we suggest
 a few alternative sampling strategies and compare them with recently proposed methods.
\end{abstract}

\begin{IEEEkeywords}
Signals on graphs, Graph Fourier Transform, uncertainty principle, sampling theory.
\end{IEEEkeywords}

\section{Introduction}
\label{sec:intro}
\IEEEPARstart{I}{n} many applications, from sensor to social networks, gene regulatory networks or big data, observations can be represented as a signal defined over the vertices of a graph \cite{shuman2013emerging}, \cite{sandryhaila2014big}. Over the last few years, a series of papers produced a significant advancement in the development of processing tools for the analysis of signals defined over a graph, or graph signals for short \cite{shuman2013emerging}, \cite{sandryhaila2013discrete}. A central role is of course played by spectral analysis of graph signals, which passes through the introduction of the so called Graph Fourier Transform (GFT). Alternative definitions of GFT exist, depending on the different perspectives used to extend classical tools. Two basic approaches are available, proposing the projection of the graph signal onto the eigenvectors of either the graph Laplacian, see, e.g., \cite{shuman2013emerging}, \cite{zhu2012approximating} or of the adjacency matrix, see, e.g. \cite{sandryhaila2013discrete}, \cite{chen2015discrete}. Typically, even though a Laplacian matrix can be defined for both directed and undirected graphs, the methods in the first class assume undirected graphs, whereas the methods in the second class consider the more general directed case. Given the GFT definition, in \cite{agaskar2013spectral} and very recently in \cite{pasdeloup2015toward}, \cite{benedettograph}, \cite{koprowski2015finite}, it was derived a graph uncertainty principle aimed at expressing the fundamental relation between the spread of a signal over the vertex and spectral domains. The approach used in \cite{agaskar2013spectral} is based on the transposition of classical Heisenberg's method to graph signals.  However, although the results are interesting, this transposition gives rise to a series of questions, essentially related to the fact that while time and frequency domains are inherently metric spaces, the vertex domain is not. This requires a careful reformulation of spread in vertex and its transformed domain, which should not make any assumption about ordering and metrics over the graph domain.

A further fundamental tool in signal processing is sampling theory. An initial basic contribution to the extension of sampling theory to graph signals was given in \cite{pesenson2008sampling}. The theory developed in \cite{pesenson2008sampling} aimed to show that, given a subset of samples, there exists a cutoff frequency $\omega$ such that, if the spectral support of the signal lies in $[0, \omega]$, the overall signal can be reconstructed with no errors. Later, \cite{narang2013signal} extended  the results of \cite{pesenson2008sampling} providing a method to identify uniqueness sets, compute the cut-off frequency and to interpolate signals which are not exactly band-limited. Further very recent works  provided the conditions for perfect recovery of band-limited graph signals: \cite{chenicassp2015}, \cite{chen2015discrete}, based on the adjacency matrix formulation of the GFT; \cite{TsitsveroEusipco15}, based on the identification of an orthonormal basis maximally concentrated over the joint vertex/frequency domain; \cite{wang2014local}, based on local-set graph signal reconstructions; \cite{marquez2015}, illustrating the conditions for perfect recovery, based on successive local aggregations.

The contribution of this paper is threefold: a) we derive an uncertainty principle for graph signals, based on the generalization of classical Slepian-Landau-Pollack seminal works \cite{Slepian:1961:PSW}, \cite{landau1961prolate}, including the conditions for perfect localization of a graph signal in {\it both} vertex and frequency domains; b) we establish a link between uncertainty principle and sampling theory, thus deriving the necessary and sufficient conditions for the recovery of band-limited graph signals from its samples; c) we provide alternative sampling strategies aimed at improving the performance of the recovery algorithms in the presence of noisy observations and compare their performance with recently proposed methods.

\section{Basic Definitions}

We consider a graph $\G = (\V, \E)$ consisting of a set of $N$ nodes $\V = \{1,2,..., N\}$, along with a set of weighted edges $\E=\{a_{ij}\}_{i, j \in \V}$, such that $a_{ij}>0$, if there is a link from node $j$ to node $i$, or $a_{ij}=0$, otherwise.
A signal $\bx$ over a graph $\G$ is defined as a mapping from the vertex set to complex vectors of size $N$, i.e. $\bx : \V \rightarrow \mathbb{C}^{\abs{\V}} $.  The adjacency matrix $\mA$ of a graph is the collection of all the weights $a_{ij}, i, j = 1, \ldots, N$.  The degree of node $i$ is $k_i:=\sum_{j=1}^{N}a_{ij}$. The degree matrix is a diagonal matrix having the node degrees on its diagonal: $\mathbf{K} = \diag \{ k_1, k_2, ... , k_N \}$. The combinatorial Laplacian matrix is defined as $\mathbf{L} = \mathbf{K}-\mathbf{A}$.
If the graph is undirected, the Laplacian matrix is symmetric and positive semi-definite, and admits the eigendecomposition $\mathbf{L}=\mathbf{U}\boldsymbol{\Lambda}\mathbf{U}^H$, where $\mathbf{U}$ collects all the eigenvectors of $\mathbf{L}$ in its columns, whereas $\boldsymbol{\Lambda}$ is a diagonal matrix containing the eigenvalues of $\mathbf{L}$. The Graph Fourier Transform (GFT) has been defined in alternative ways, see, e.g., \cite{shuman2013emerging}, \cite{zhu2012approximating}, \cite{sandryhaila2013discrete}, \cite{chen2015discrete}. In this paper, we follow the approach based on the Laplacian matrix, but the theory can be extended to the adjacency based approach with minor modifications. In the Laplacian-based approach,
the Graph Fourier Transform $\mathbf{\hat{\bx}}$ of a vector $\bx$ is defined as the projection of  $\bx$ onto the space spanned by the eigenvectors of $\bL$ \cite{shuman2013emerging}, i.e.
\begin{equation}
\mathbf{\hat{\bx}} = \mathbf{U}^H \bx,
\end{equation}
where $^H$ denotes Hermitian operator (conjugate and transpose). The inverse Fourier transform is then
\begin{equation}
\mathbb{\bx} = \mathbf{U} \mathbf{\hat{\bx}}.
\end{equation}
Given a subset of vertices $\S \subseteq \V$, we define a vertex-limiting operator as the diagonal matrix
\begin{equation}
\label{D}
\mathbf{D}_{\S} = {\rm Diag}\{\buno_{\S}\},
\end{equation}
where $\buno_{\S}$ is the set indicator vector, whose $i$-th entry is equal to one, if  $i \in \S$, or zero otherwise. Similarly, given a subset of frequency indices $\F \subseteq \V^*$, we introduce the filtering operator
\begin{equation}
\label{lowpass_operator}
\mB_{\F} = \mathbf{U \Sigma_{\F} U}^H,
\end{equation}
where $\mathbf{\Sigma_{\F}}$ is a diagonal matrix defined as $\mathbf{\Sigma_{\F}}= {\rm Diag}\{\buno_{\F}\}$.
It is immediate to check that both matrices $\mathbf{D}_{\S}$ and $\mathbf{B}_{\F}$ are self-adjoint and idempotent, and then they represent orthogonal projectors. We refer to the space of all signals whose GFT is exactly supported on the set $\F$,  as the {\it Paley-Wiener space} for the set $\F$. We denote by $\B_{\F} \subseteq L_2 (\G)$ the set of all finite $\ell_2$-norm signals belonging to the Paley-Wiener space associated to $\F$. Similarly, we denote by $\D_{\S} \subseteq L_2(\G)$ the set of all finite $\ell_2$-norm signals with support on the vertex subset $\S$. In the rest of the paper, whenever there will be no ambiguities in the specification of the sets, we will drop the subscripts referring to the sets. Finally, given a set $\S$, we denote its complement set as $\oline{\S}$, such that $\V=\S \cup \oline{\S}$ and $\S \cap \oline{\S}=\emptyset$. Correspondingly, we define the vertex-projector onto $\oline{\S}$ as $\oline{\mathbf{D}}$ and, similarly, the frequency projector onto the frequency domain $\oline{\F}$ as $\oline{\mathbf{B}}$.

\section{Localization Properties}
\label{Localization}

In this section we derive the class of signals maximally concentrated over given subsets $\S$ and $\F$ in vertex and frequency domains.
We say that a vector $\bx$ is perfectly localized over the subset $\S \subseteq \V$ if
\begin{equation}
\label{Dx=x}
\mD \bx=\bx,
\end{equation}
with $\mD$ defined as in (\ref{D}).
Similarly, a vector $\bx$ is perfectly localized over the frequency set $\F \subseteq \V^*$ if
\begin{equation}
\label{Bx=x}
\mB \bx=\bx,
\end{equation}
with $\mB$ given in (\ref{lowpass_operator}).
Differently from continuous-time signals, a graph signal can be perfectly localized in {\it both} vertex and frequency domains. This
is stated in the following theorem.

\begin{theorem}
\label{theorem_unit_eigenvalue}
There exists a vector $\bx \in L_2 (\G)$ that is perfectly localized over both vertex set $\S$ and frequency set $\F$ if and only if the operator $\mB \mD \mB$ has an eigenvalue equal to one; in such a case, $\bx$ is an eigenvector associated to the unit eigenvalue.
\end{theorem}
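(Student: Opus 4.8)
The plan is to lean on the property, already recorded above, that $\mD$ and $\mB$ are orthogonal projectors: each is self-adjoint and idempotent, hence non-expansive, so the composition $\mB\mD\mB$ is self-adjoint, positive semidefinite, and satisfies $\|\mB\mD\mB\|\le\|\mB\|\,\|\mD\|\,\|\mB\|\le 1$. In particular $1$ is the largest value the spectrum of $\mB\mD\mB$ can attain, so an eigenvalue equal to one is an eigenvalue of maximal modulus. The proof then splits into the two implications.

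\emph{Necessity.} Suppose $\bx\in L_2(\G)$, $\bx\neq 0$, is perfectly localized over both $\S$ and $\F$, i.e. $\mD\bx=\bx$ and $\mB\bx=\bx$. Then a one-line computation,
\[
\mB\mD\mB\,\bx=\mB\mD\,\bx=\mB\,\bx=\bx,
\]
exhibits $\bx$ as an eigenvector of $\mB\mD\mB$ with eigenvalue one.

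\emph{Sufficiency.} Suppose $\mB\mD\mB\,\bx=\bx$ for some $\bx\neq 0$. Applying $\mB$ to both sides and using $\mB^2=\mB$ gives $\mB\bx=\mB(\mB\mD\mB\bx)=\mB\mD\mB\bx=\bx$, so $\bx\in\B_{\F}$. For the vertex localization, take the inner product of the eigen-equation with $\bx$ and use that $\mD$ and $\mB$ are self-adjoint and idempotent together with $\mB\bx=\bx$:
\[
\|\bx\|^2=\langle\mB\mD\mB\bx,\bx\rangle=\langle\mD\mB\bx,\mB\bx\rangle=\langle\mD\bx,\bx\rangle=\langle\mD\bx,\mD\bx\rangle=\|\mD\bx\|^2 .
\]
Since $\mD$ is an orthogonal projector, $\bx=\mD\bx+\mDc\bx$ is an orthogonal decomposition, whence $\|\bx\|^2=\|\mD\bx\|^2+\|\mDc\bx\|^2$; combined with the identity above this forces $\|\mDc\bx\|=0$, i.e. $\mD\bx=\bx$. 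Thus $\bx$ is simultaneously localized over $\S$ and $\F$, which also identifies the unit eigenvector as the desired signal.

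The argument is essentially mechanical; the only step that is not pure bookkeeping is the energy-splitting (Pythagorean) identity in the sufficiency part, which is where the fact that $\mD$ is specifically an \emph{orthogonal} projector—not merely idempotent—is used, and where one must be careful that the intermediate manipulations only invoke self-adjointness, idempotency, and the already-derived relation $\mB\bx=\bx$. An alternative route would be to track the equality cases in $\|\mB\mD\|\le\|\mB\|\,\|\mD\|\le 1$, but the inner-product computation above is shorter and self-contained.
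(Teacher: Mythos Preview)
Your proof is correct. Both implications are handled cleanly: the necessity is immediate from the eigen-equations, and the sufficiency correctly extracts $\mB\bx=\bx$ by idempotency and then forces $\mD\bx=\bx$ via the Pythagorean identity for the orthogonal projector $\mD$.

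As for comparison: the paper does not actually prove this theorem in the body of the text---it simply cites an external reference (\cite{tsitsvero2015signals}) for the proof---so there is no in-paper argument to set yours against. Your derivation is the standard self-contained one for this kind of statement (equality in the projector norm bound forces invariance under both projectors), and it would serve perfectly well as the omitted proof.
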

\begin{proof}
The proof can be found in \cite{tsitsvero2015signals}. \hspace{3cm} \qed
\end{proof}
Equivalently, the perfect localization properties can be expressed in terms of the operators $\mB \mD$ and $\mD \mB$, thus leading to the following condition:
\begin{equation}
\label{|BD|=1=|DB|}
\|\mB \mD\|_2 = 1; \,\,\,\, \|\mD \mB\|_2 = 1.
\end{equation}

\noindent Typically, given two generic domains $\S$ and $\F$, we might not have perfectly concentrated signals in both domains. In such a case, it is worth finding the class of signals with limited support in one domain and maximally concentrated on the other. For example, we may search for the class of perfectly band-limited signals, i.e. $\mB \bx = \bx$, which are maximally concentrated in a vertex domain $\S$ or, viceversa, the class of signals with support on a subset of vertices, i.e. $\mD \bx = \bx$, which are maximally concentrated in a frequency  domain $\F$. The following theorem introduces the class of maximally concentrated functions in the band-limited scenario.

\begin{theorem}
\label{theorem::max_concentrated_vectors}
The class of orthonormal band-limited vectors $\bpsi_i$, $i=1, \ldots, N$, with $\mB \bpsi_i=\bpsi_i$, maximally concentrated over a vertex set $\S$, is given by the eigenvectors of $\mB \mD \mB$, i.e.
\begin{equation}
\label{slep_func:bdb_eigendecomposition}
\mathbf{BDB} \bpsi_i = \lambda_i \bpsi_i,
\end{equation}
with $\lambda_1\ge \lambda_2 \ge \ldots \ge \lambda_N$.
Furthermore, these vectors are orthogonal over the set $\S$, i.e. $\langle \bpsi_i, \mathbf{D} \bpsi_j \rangle = \lambda_j \delta_{ij}$, where $\delta_{ij}$ is the Kronecker symbol.
\end{theorem}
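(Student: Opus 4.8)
The plan is to read the concentration of a band-limited signal over $\S$ as a Rayleigh quotient of the operator $\mB\mD\mB$ and then invoke the variational (Rayleigh--Ritz / Courant--Fischer) characterization of its eigenpairs. First I would fix the figure of merit: for a band-limited vector $\bx$, i.e.\ $\mB\bx=\bx$, define its concentration over $\S$ as $\alpha^2(\bx)=\|\mD\bx\|^2/\|\bx\|^2$. Using that $\mD$ and $\mB$ are self-adjoint idempotents and that $\bx=\mB\bx$, the numerator rewrites as
\begin{equation}
\|\mD\bx\|^2=\langle \mD\bx,\mD\bx\rangle=\langle \mD\bx,\bx\rangle=\langle \mD\mB\bx,\mB\bx\rangle=\langle \mB\mD\mB\bx,\bx\rangle,
\end{equation}
so that $\alpha^2(\bx)=\langle \mB\mD\mB\bx,\bx\rangle/\langle \bx,\bx\rangle$ on the Paley--Wiener space $\B_{\F}$.

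Next I would record the structural facts about $\bM:=\mB\mD\mB$. It is self-adjoint and positive semi-definite, since $\langle \bM\bx,\bx\rangle=\|\mD\mB\bx\|^2\ge 0$. It leaves $\B_{\F}=\im\mB$ invariant, because $\bM\bx=\mB(\mD\mB\bx)\in\im\mB$, and it annihilates the orthogonal complement $\B_{\F}^{\perp}=\im\mBc$, because $\mB\bx=0$ there. Hence its spectral decomposition splits along $L_2(\G)=\B_{\F}\oplus\B_{\F}^{\perp}$: every eigenvector with eigenvalue $\lambda_i>0$ automatically satisfies $\bpsi_i=\lambda_i^{-1}\bM\bpsi_i\in\B_{\F}$, i.e.\ is band-limited, and one may complete to an orthonormal eigenbasis of $L_2(\G)$ whose band-limited members span $\B_{\F}$ (ordered so that $\lambda_1\ge\lambda_2\ge\cdots$).

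Then I would apply Rayleigh--Ritz to $\bM$: the maximum of $\langle \bM\bx,\bx\rangle/\langle \bx,\bx\rangle$ over nonzero $\bx$ equals $\lambda_1$ and is attained on the $\lambda_1$-eigenspace; restricting the maximization to vectors orthogonal to $\bpsi_1,\ldots,\bpsi_{k-1}$ yields $\lambda_k$, attained at $\bpsi_k$. Because this quotient coincides with $\alpha^2(\cdot)$ on $\B_{\F}$ and the relevant maximizers already lie in $\B_{\F}$, this is precisely the asserted nested "most concentrated orthonormal band-limited vectors" statement, with concentration $\lambda_i$. For the orthogonality over $\S$, since $\bpsi_j$ is band-limited, $\mD\bpsi_j=\mD\mB\bpsi_j$, and using $\bpsi_i=\mB\bpsi_i$ together with $\mB=\mB^H$,
\begin{equation}
\langle \bpsi_i,\mD\bpsi_j\rangle=\langle \mB\bpsi_i,\mD\mB\bpsi_j\rangle=\langle \bpsi_i,\mB\mD\mB\bpsi_j\rangle=\lambda_j\langle \bpsi_i,\bpsi_j\rangle=\lambda_j\delta_{ij},
\end{equation}
which in particular gives $\|\mD\bpsi_j\|^2=\lambda_j$, confirming that $\lambda_j$ is the concentration of the unit-norm $\bpsi_j$.

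I expect the only delicate point to be the second step: arguing cleanly that the eigenvectors of $\mB\mD\mB$ with nonzero eigenvalue are genuinely band-limited and that the \emph{constrained} variational problem over $\B_{\F}$ is solved by them, i.e.\ that dropping the explicit constraint $\mB\bx=\bx$ costs nothing because $\mB\mD\mB$ vanishes on $\B_{\F}^{\perp}$. Everything else is bookkeeping with the projector identities $\mD^2=\mD=\mD^H$, $\mB^2=\mB=\mB^H$ and the standard Rayleigh--Ritz theorem.
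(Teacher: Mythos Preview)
Your argument is correct and is the standard Slepian--Pollack variational argument transported to the graph setting: rewrite the vertex-domain concentration as the Rayleigh quotient of the self-adjoint, positive semidefinite operator $\mB\mD\mB$, observe that this operator acts as zero on $\B_{\F}^{\perp}$ so the band-limiting constraint is automatically enforced, and then invoke Courant--Fischer; the double orthogonality drops out of the eigenvalue equation exactly as you wrote.

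As for the comparison: the paper does not actually prove this theorem here---it simply cites the companion work \cite{tsitsvero2015signals}. Given that the theorem is explicitly presented as the graph counterpart of the prolate spheroidal construction of Slepian and Pollack \cite{Slepian:1961:PSW}, your Rayleigh--Ritz route is almost certainly the intended one, and in any case it is complete and self-contained. The one place worth tightening in a final write-up is the bookkeeping around the index range: only $\abs{\F}$ orthonormal vectors can satisfy $\mB\bpsi_i=\bpsi_i$, so the remaining $N-\abs{\F}$ eigenvectors of $\mB\mD\mB$ (all with eigenvalue zero) live in $\B_{\F}^{\perp}$ and are not band-limited; the theorem's phrasing ``$i=1,\ldots,N$ with $\mB\bpsi_i=\bpsi_i$'' is slightly loose in this respect, and your handling of it (splitting along $\B_{\F}\oplus\B_{\F}^{\perp}$) is the right clarification.
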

\begin{proof}
The proof can be found in \cite{tsitsvero2015signals}. \hspace{3cm}\qed
\end{proof}
The vectors $\bpsi_i$ are the counterpart of the prolate spheroidal wave functions introduced by Slepian and Pollack in \cite{Slepian:1961:PSW}.

\section{Uncertainty principle}
\label{Uncertainty principle}
A cornerstone property of continuous-time signals is the Heisenberg's principle, stating that a signal cannot be perfectly localized in both time and frequency domains, see, e.g., \cite{Folland1997}. More specifically, given a continuous-time signal $x(t)$ centered around $t_0$ and its Fourier Transform $X(f)$ centered at $f_0$, the uncertainty principle states that $\Delta_t^2 \Delta_f^2 \ge \frac{1}{(4 \pi)^2}$, where $\Delta_t^2$ and $\Delta_f^2$ are computed as the second order moments of the instantaneous power $|x(t)|^2$ and the spectral density $|X(f)|^2$, centered around their center of gravity points $t_0$ and $f_0$, respectively.
Quite recently, the uncertainty principle was extended to signals on graphs in \cite{agaskar2013spectral} by following an approach based on the transposition of the previous definitions of time and frequency spreads to graph signals. However, although interesting, this transposition hides a number of subtleties, which can limit the status of the result as a ``fundamental'' result, i.e. a result not constrained to any specific choice. More specifically, what happens is that the second order moments $\Delta_t^2$ and $\Delta_f^2$ contain a measure of distance in the time and frequency domains. When transposing these formulas to graph signals, it is necessary to define a distance between vertices of a graph. This is done in \cite{agaskar2013spectral} by using a common measure of graph distance, defined as the sum of weights along the shortest path between two vertices (equal to the number of hops, in case of unweighted graph). However, although perfectly legitimate, this formulation raises a number of questions: i) Is it correct, within the context of deriving fundamental limits, to exchange vertex or frequency distances with a graph distance defined as number of (possibly weighted) hops? ii) when moving by a time interval $\Delta t$ from $t_0$, we always get a single signal value $x(t)$; however, when moving $m$ steps away from a node $n_0$, we may encounter several nodes at the same distance; how should we weight these different contributions, possibly without making arbitrary assumptions? To overcome the above problems, in this paper we resort to an alternative definition of spread in vertex and frequency domain, generalizing the works of  Slepian, Landau and Pollack \cite{Slepian:1961:PSW}, \cite{landau1961prolate}. In particular, given a vertex set $\S$ and a frequency set $\F$, we denote by $\alpha^2$ and $\beta^2$ the percentage of energy falling within the sets $\S$ and $\F$, respectively, as
\begin{equation}
\frac{\|\mD \bx\|_2^2}{\| \bx\|_2^2}=\alpha^2;\qquad \frac{\|\mB \bx\|_2^2}{\| \bx\|_2^2}=\beta^2.
\end{equation}
Generalizing the approach of \cite{landau1961prolate} to graph signals, our goal is to find out the region of all admissible pairs $(\alpha, \beta)$ and to illustrate which are the graph signals able to attain all the points in such a region. The uncertainty principle is stated in the following theorem.

\begin{theorem}
\label{theorem::Uncertainty principle}
There exists $\bff \in L_2(\G)$ such that $\|\bff\|_2 = 1$, $\| \mD \bff \|_2 = \alpha$, $\| \mB \bff \|_2 = \beta $ if and only if $\left( \alpha, \beta \right) \in \Gamma$, with 
\begin{align}
\label{eq::uncertainty_region_Gamma}
\Gamma &= \left\{ \left( \alpha , \beta \right) : \right. \nonumber \\
& \cos^{-1} \alpha + \cos^{-1} \beta \geq \cos^{-1} \sigma_{max} \left( \mB \mD \right), \nonumber \\
& \cos^{-1} \sqrt{1 - \alpha^2} + \cos^{-1} \beta \geq \cos^{-1} \sigma_{max} \left( \mB \mDc \right), \\
& \cos^{-1} \alpha + \cos^{-1} \sqrt{ 1 - \beta^2} \geq \cos^{-1} \sigma_{max} \left( \mBc \mD \right), \nonumber \\
& \left. \cos^{-1} \sqrt{ 1- \alpha^2 } + \cos^{-1} \sqrt{1 - \beta^2} \geq \cos^{-1} \sigma_{max} \left( \mBc \mDc \right) \right\}. \nonumber
\end{align}
\end{theorem}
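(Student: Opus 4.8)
The plan is to read the four inequalities defining $\Gamma$ as triangle inequalities for the angular metric $d(\br,\bs)=\cos^{-1}|\langle\br,\bs\rangle|$ on the lines of $\mathbb{C}^N$, and to establish attainability by simultaneously block-diagonalizing the pair of projectors $(\mD,\mB)$. For \emph{necessity}, fix a unit vector $\bff$ and set $\theta_{\D}=\cos^{-1}\|\mD\bff\|_2=\cos^{-1}\alpha$ and $\theta_{\B}=\cos^{-1}\|\mB\bff\|_2=\cos^{-1}\beta$. Since $\mD$ is the orthogonal projector onto $\im\mD=\D_\S$, Cauchy--Schwarz shows that $\theta_\D$ is exactly the $d$-distance from $\mathbb{C}\bff$ to the nearest line contained in $\D_\S$, attained at $\mathbb{C}\mD\bff$ whenever $\mD\bff\neq\b0$; likewise $\theta_\B$ is the distance from $\bff$ to $\B_\F=\im\mB$. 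The triangle inequality then gives $d(\mD\bff,\mB\bff)\le\theta_\D+\theta_\B$, while $d(\mD\bff,\mB\bff)$ is at least the smallest principal angle between $\D_\S$ and $\B_\F$, which equals $\cos^{-1}\|\mB\mD\|_2=\cos^{-1}\sigma_{max}(\mB\mD)$; this is the first inequality (the cases $\alpha=0$ or $\beta=0$ being immediate because $\cos^{-1}\sigma_{max}(\cdot)\le\pi/2$). The remaining three inequalities follow identically after replacing $\mD$ by $\mDc$ and/or $\mB$ by $\mBc$, using $\|\mDc\bff\|_2^2=1-\alpha^2$ and $\|\mBc\bff\|_2^2=1-\beta^2$ (Pythagoras, from $\mD+\mDc=\mB+\mBc=\mI$), which turn $\cos^{-1}\|\mDc\bff\|_2$ into $\cos^{-1}\sqrt{1-\alpha^2}$ and similarly for $\beta$.

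For \emph{sufficiency} I would invoke the cosine-sine (CS) decomposition of the pair $(\mD,\mB)$: the space $\mathbb{C}^N$ splits orthogonally into the four intersection subspaces $\im\mD\cap\im\mB$, $\im\mD\cap\im\mBc$, $\im\mDc\cap\im\mB$, $\im\mDc\cap\im\mBc$ and a family of two-dimensional subspaces $\mathcal{P}_k$ on which $\mD$ and $\mB$ restrict to rank-one projectors making an angle $\theta_k\in(0,\pi/2)$. A short computation matches the four operator norms in $\Gamma$ to this data: with $\theta_{min}=\min_k\theta_k$ and $\theta_{max}=\max_k\theta_k$, one has $\cos^{-1}\sigma_{max}(\mB\mD)=\cos^{-1}\sigma_{max}(\mBc\mDc)=\theta_{min}$ and $\cos^{-1}\sigma_{max}(\mB\mDc)=\cos^{-1}\sigma_{max}(\mBc\mD)=\pi/2-\theta_{max}$, each of these being instead $0$ exactly when the corresponding intersection subspace is nonzero (the regime characterized in Theorem~\ref{theorem_unit_eigenvalue}). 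Writing $a=\cos^{-1}\alpha$ and $b=\cos^{-1}\beta$ in $[0,\pi/2]$, the condition $(\alpha,\beta)\in\Gamma$ then collapses to the band $\theta_{min}\le a+b\le\pi-\theta_{min}$, $|a-b|\le\theta_{max}$, with a constraint dropped whenever its intersection subspace is nonzero. To hit a target $(a,b)$: inside a single block $\mathcal{P}_k$ a unit vector with suitably chosen complex coordinates realizes every pair $(\cos^{-1}\|\mD\bff\|_2,\cos^{-1}\|\mB\bff\|_2)$ in the parallelogram $\{\theta_k\le a+b\le\pi-\theta_k,\ |a-b|\le\theta_k\}$ -- the two extra phases are what reach its interior and not just its boundary; and if $\bff_1,\bff_2$ are unit vectors lying in two orthogonal summands, each invariant under $\mD$ and $\mB$, then $c_1\bff_1+c_2\bff_2$ with $|c_1|^2+|c_2|^2=1$ satisfies $\|\mD(c_1\bff_1+c_2\bff_2)\|_2^2=|c_1|^2\|\mD\bff_1\|_2^2+|c_2|^2\|\mD\bff_2\|_2^2$ and likewise for $\mB$, so the set of attainable $(\alpha^2,\beta^2)$ is the convex hull of the per-summand sets. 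One then checks that this hull equals the image of $\Gamma$: the intersection subspaces furnish the corner regimes (e.g.\ $(\alpha,\beta)=(1,1)$ from a vector in $\im\mD\cap\im\mB$), and among the blocks only the extreme angles $\theta_{min},\theta_{max}$ contribute hull vertices, exactly reproducing the band.

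The hard part will be the sufficiency half: assembling the CS decomposition and, above all, the bookkeeping that ties the four norms $\sigma_{max}(\mB\mD)$, $\sigma_{max}(\mB\mDc)$, $\sigma_{max}(\mBc\mD)$, $\sigma_{max}(\mBc\mDc)$ to the single pair $(\theta_{min},\theta_{max})$ and the dimensions of the four intersection subspaces, and then verifying that the convex hull of the per-block parallelograms together with the corner regimes is \emph{exactly} $\Gamma$ -- neither larger (which would contradict necessity) nor smaller. The interior-point construction inside a two-dimensional block, which genuinely needs complex rather than merely real coordinates, is the one place where the graph setting departs from the classical Slepian-Landau-Pollak derivation: there only the first of the four inequalities is non-trivial, since a nonzero time-limited signal can never be band-limited, whereas on a graph this can fail, as Theorem~\ref{theorem_unit_eigenvalue} records.
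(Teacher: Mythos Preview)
The conference paper does not actually prove this theorem: its proof environment contains only a pointer to the companion journal article \cite{tsitsvero2015signals}. That reference follows Landau and Pollak \cite{landau1961prolate} closely, expanding $\bff$ in the prolate basis $\{\bpsi_i\}$ of Theorem~\ref{theorem::max_concentrated_vectors}, writing $\|\mD\bff\|_2^2$ and $\|\mB\bff\|_2^2$ as quadratic forms in the expansion coefficients, and optimizing directly; the explicit extremal vector displayed in (\ref{eq::opt_vectro_closed_form}) is one instance of that construction.

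Your route is equivalent in content but packaged differently. The necessity argument via the angular (Fubini--Study) metric and the triangle inequality is correct and arguably more transparent than the eigenfunction computation. For sufficiency, the CS/two-projection decomposition you invoke is exactly the same object as the prolate eigendecomposition---each two-dimensional block $\mathcal P_k$ with angle $\theta_k$ corresponds to a prolate eigenvalue $\lambda_i=\cos^2\theta_k$ of $\mB\mD\mB$ together with its companion direction in $\ker\mB$---so the two approaches coincide at the structural level; yours simply makes the geometry of $\Gamma$ more visible. One place where your sketch is slightly too compressed: you write the first and fourth thresholds both as $\theta_{min}$ and the second and third both as $\pi/2-\theta_{max}$, but each of the four is forced to zero by a \emph{distinct} intersection subspace ($\im\mD\cap\im\mB$, $\ker\mD\cap\im\mB$, $\im\mD\cap\ker\mB$, $\ker\mD\cap\ker\mB$ respectively), so in general the ``band'' can be asymmetric. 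This does not break the argument---the convex-hull step, together with the corner regimes contributed by whichever intersections are nontrivial, still fills $\Gamma$ exactly---but it is precisely the case analysis you correctly flag as the hard part.
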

\begin{proof}
The proof can be found in \cite{tsitsvero2015signals}. \hspace{3cm}\qed
\end{proof}

An illustrative example of admissible region $\Gamma$ is reported in Fig. \ref{fig:uncertainty}. A few remarks about the border of the region $\Gamma$ are of interest. In general, any of the four curves at the corners of region $\Gamma$ in Fig. \ref{fig:uncertainty} may collapse onto the corresponding corner, whenever the conditions for perfect localization of the corresponding operator hold true. Furthermore, the curve in the upper right corner of $\Gamma$ specifies the pairs $(\alpha, \beta)$ that yield maximum concentration. This curve has equation
\begin{equation}
\label{eq::beta alpha curve}
\cos^{-1} \alpha + \cos^{-1} \beta = \cos^{-1} \sigma_{max}(\mathbf{B}\mathbf{D}).
\end{equation}
Solving with respect to $\beta$, and setting $\sigma^2_{max} := \sigma^2_{max}(\mathbf{B}\mathbf{D})$, we get
\begin{equation}
\label{beta vs alpha}
\beta = \alpha\, \sigma_{max} + \sqrt{(1-\alpha^2)(1-\sigma_{max}^2)}.
\end{equation}
For any given subset of nodes $\S$, as the cardinality of $\F$ increases, this upper curve gets closer and closer to the upper right corner, until it collapses on it, indicating perfect localization in both vertex and frequency domains. In particular, if we are interested in the allocation of energy within the sets $\S$ and $\F$ that maximizes, for example, the sum of the (relative) energies $\alpha^2+\beta^2$ falling in the vertex and frequency domains, the result is given by the intersection of the upper right curve, i.e. (\ref{beta vs alpha}), with the line $\alpha^2+\beta^2={\rm const}$. Given the symmetry of the curve (\ref{eq::beta alpha curve}), the result is achieved by setting $\alpha=\beta$, which yields
\begin{equation}
\alpha^2=\frac 1 2 (1+\sigma_{max}).
\end{equation}
The corresponding function $\bff'$ can then be written in closed form as:
\begin{equation}
\label{eq::opt_vectro_closed_form}
\bff' = \frac{\bpsi_1 - \mD \bpsi_1}{\sqrt{2\left(1 + \sigma_{max} \right)}} + \sqrt{\frac{1+\sigma_{max}}{2\sigma^2_{max}}} \mD \bpsi_1,
\end{equation}
where $\bpsi_1$ is the eigenvector of $\mB \mD \mB$ corresponding to $\sigma_{max}^2$  (please refer to \cite{tsitsvero2015signals} for details).

\begin{figure}[t]
\vspace{-2cm}
\begin{minipage}[t]{1.0\linewidth}
  \centering

 \centerline{\includegraphics[natwidth=127mm,natheight=101.6mm]{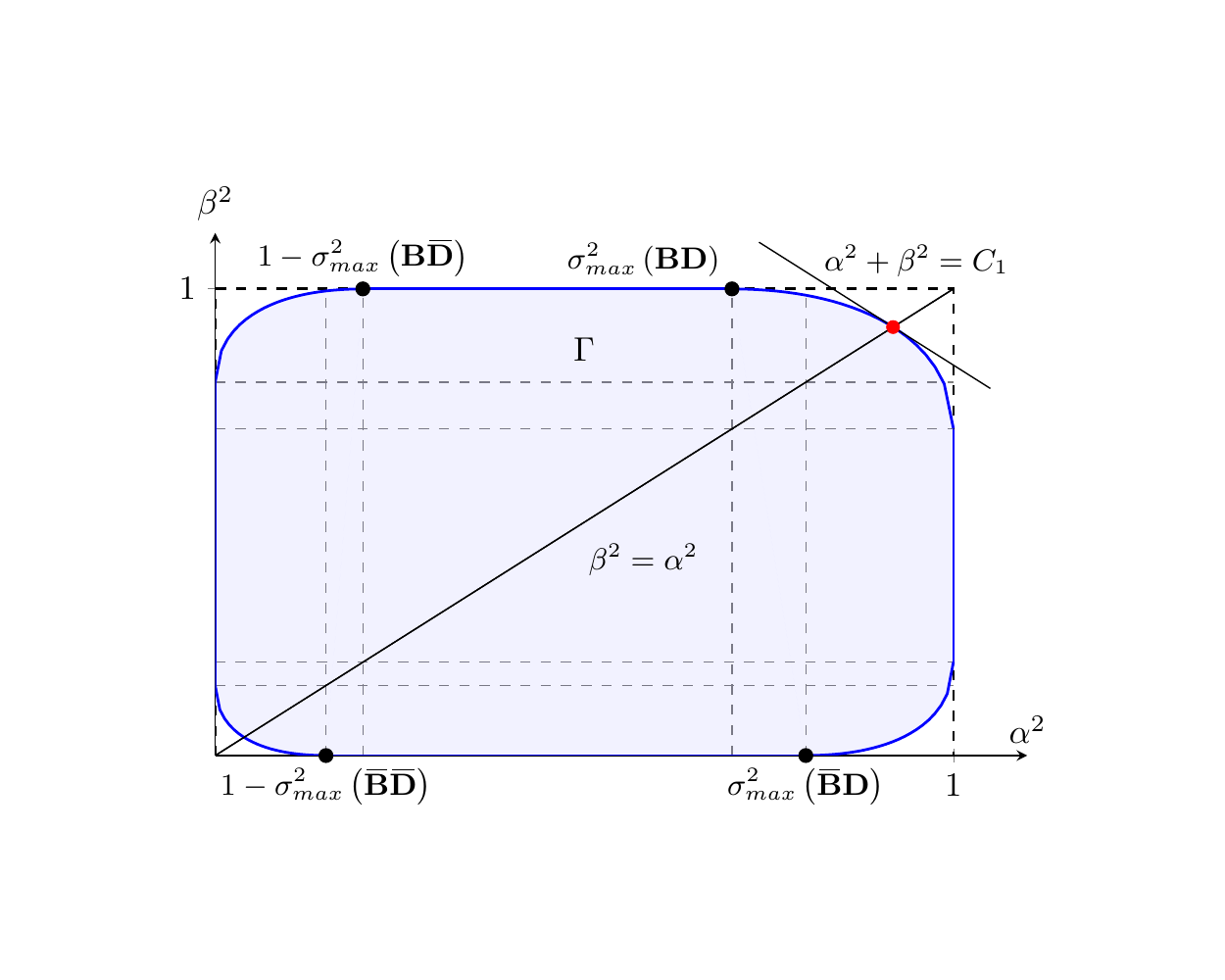}}
\end{minipage}
\vspace*{-2cm}
\caption{Admissible region $\Gamma$ of unit norm signals $\bff \in L^2(\G)$ with $\| \mD \bff \|_2=\alpha$ and $\| \mB \bff \|_2=\beta$.}
\label{fig:uncertainty}
\end{figure}

\section{Sampling}
\label{sec::Sampling}
Given a signal $\bff \in \B$ defined on the vertices of a graph, let us denote by $\bffs \in \D$ the vector equal to $\bff$ on a subset $\S \subseteq \V$ and zero outside:
\begin{equation}
\label{r=Ds}
\bffs := \mathbf{D} \bff.
\end{equation}
We wish to find out the conditions and the means for perfect recovery of $\bff$ from $\bffs$. The necessary and sufficient conditions are stated in the following sampling theorem.
\begin{theorem}[Sampling Theorem]
\label{theorem::sampling theorem}
Given a band-limited vector $\bff \in \B$, it is possible to recover $\bff$ from its samples taken from the set $\S$, if  and only if
\begin{equation}
\label{|DcB|<1}
\| \mB \mDc \|_2 < 1,
\end{equation}
i.e. if the matrix $\mB \mDc \mB$ does not have any eigenvector that is perfectly localized on $\oline{\S}$ and bandlimited on $\F$. Any signal $\bff \in \B$ can then be reconstructed from its sampled version $\bff_{\scriptsize \S} \in \D$ using the following reconstruction formula:
\begin{equation}
\label{eq::sampling_theorem_formula}
\bff = \sum_{i=1}^{\scriptsize \abs{\F}} \frac{1}{\sigma_i^2} \langle \bffs, \bpsi_i \rangle \bpsi_i,
\end{equation}
where $\left\{ \bpsi_i \right\}_{i = 1 .. N}$ and $\left\{ \sigma^2_i \right\}_{i = 1..N}$ are the eigenvectors and eigenvalues of $\mB \mD \mB$, respectively.
\end{theorem}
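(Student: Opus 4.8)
The plan is to split the statement into the equivalence $\bigl[\text{recoverable}\bigr]\Leftrightarrow\|\mB\mDc\|_2<1$ and the validity of the reconstruction formula (\ref{eq::sampling_theorem_formula}), using the localization results of Section \ref{Localization} for both.

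For the equivalence, I would first observe that every $\bff\in\B$ is recoverable from $\mD\bff$ precisely when the linear map $\bff\mapsto\mD\bff$ is injective on $\B$, since two distinct band-limited signals with equal samples cannot be told apart by any procedure. This map has a nontrivial kernel iff there is a nonzero $\bff\in\B$ with $\mD\bff=\b0$, i.e.\ with $\bff=\mDc\bff$ --- a signal simultaneously band-limited on $\F$ and perfectly localized on $\oline{\S}$. Applying Theorem \ref{theorem_unit_eigenvalue} with the pair $(\oline{\S},\F)$ in place of $(\S,\F)$, such a vector exists iff $\mB\mDc\mB$ has a unit eigenvalue, which by the norm characterization (\ref{|BD|=1=|DB|}) is the same as $\|\mB\mDc\|_2=1$. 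Since $\mB$ and $\mDc$ are orthogonal projectors we always have $\|\mB\mDc\|_2\le 1$, so non-recoverability is equivalent to $\|\mB\mDc\|_2=1$ and recoverability to $\|\mB\mDc\|_2<1$, as claimed.

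For the reconstruction formula, the key preliminary step is to show that, under $\|\mB\mDc\|_2<1$, the first $\abs{\F}$ Slepian vectors $\bpsi_1,\dots,\bpsi_{\abs{\F}}$ of Theorem \ref{theorem::max_concentrated_vectors} form an orthonormal basis of $\B$ with strictly positive eigenvalues $\sigma_1^2\ge\cdots\ge\sigma_{\abs{\F}}^2>0$. I would obtain this from the operator identity $\mB\mD\mB+\mB\mDc\mB=\mB\,(\mD+\mDc)\,\mB=\mB$, which on the $\abs{\F}$-dimensional subspace $\B=\im\mB$ reduces to $\mB\mD\mB=\mI-\mB\mDc\mB$; hence the eigenvalues of $\mB\mD\mB$ restricted to $\B$ equal one minus the corresponding eigenvalues of $\mB\mDc\mB$, and the hypothesis forces them into $(0,1]$, while the remaining $N-\abs{\F}$ eigenvectors of $\mB\mD\mB$ lie in $\ker\mB$ with eigenvalue $0$. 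Expanding a generic $\bff\in\B$ in this basis as $\bff=\sum_{i=1}^{\abs{\F}}\langle\bff,\bpsi_i\rangle\,\bpsi_i$ and using $\mD=\mD^H$ together with the orthogonality relation $\langle\bpsi_j,\mD\bpsi_i\rangle=\sigma_i^2\delta_{ij}$ of Theorem \ref{theorem::max_concentrated_vectors}, one computes
\begin{equation}
\langle\bffs,\bpsi_i\rangle=\langle\mD\bff,\bpsi_i\rangle=\langle\bff,\mD\bpsi_i\rangle=\sum_{j=1}^{\abs{\F}}\langle\bff,\bpsi_j\rangle\,\langle\bpsi_j,\mD\bpsi_i\rangle=\sigma_i^2\,\langle\bff,\bpsi_i\rangle .
\end{equation}
Dividing by $\sigma_i^2\neq0$ expresses each expansion coefficient through the samples, and substituting back into the expansion of $\bff$ yields exactly (\ref{eq::sampling_theorem_formula}).

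The main obstacle is the preliminary spanning/positivity claim: one must guarantee that no band-limited direction is ``lost'' under restriction to $\S$, i.e.\ that $\sigma_1^2,\dots,\sigma_{\abs{\F}}^2$ are all nonzero and their eigenvectors exhaust $\B$. This is precisely the content of $\|\mB\mDc\|_2<1$ read through the complementarity $\mB\mD\mB+\mB\mDc\mB=\mB$ on $\B$; once it is in place, the algebra of the previous paragraph is forced and no genuine estimate is required. The remaining points --- reducing recoverability to injectivity of $\bff\mapsto\mD\bff$ and observing $\|\mB\mDc\|_2\le1$ --- are routine but worth stating explicitly.
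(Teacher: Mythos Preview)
The paper does not actually present a proof of this theorem --- it simply refers the reader to \cite{tsitsvero2015signals} --- so there is no in-paper argument to compare against. That said, your proposal is correct and entirely self-contained with respect to the results the paper \emph{does} state: you reduce recoverability to injectivity of $\bff\mapsto\mD\bff$ on $\B$, translate a nontrivial kernel into perfect joint localization on $(\oline{\S},\F)$, and invoke Theorem~\ref{theorem_unit_eigenvalue} and (\ref{|BD|=1=|DB|}) to obtain the norm characterization; for the formula you use the complementarity $\mB\mD\mB+\mB\mDc\mB=\mB$ on $\B$ to force $\sigma_1^2,\dots,\sigma_{\abs{\F}}^2>0$ and then the double orthogonality of Theorem~\ref{theorem::max_concentrated_vectors} to identify the coefficients. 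Each step is sound, and nothing beyond Section~\ref{Localization} is needed.
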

\begin{proof}
The proof can be found in \cite{tsitsvero2015signals}. \hspace{3cm}\qed
\end{proof}

\noindent Let us study now the implications of condition (\ref{|DcB|<1}) of Theorem \ref{theorem::sampling theorem} on the sampling strategy. To fulfill (\ref{|DcB|<1}), we need to guarantee that there exist no band-limited signals, i.e. $\mB \bx =\bx$, such that $\mB\mDc\bx=\bx$.
To make (\ref{|DcB|<1}) hold true, we must then ensure that $\mB \mDc \bx \neq \bx$ or, equivalently, $ \mDc \mB \bx \neq \bx$. Since
\begin{equation}
\mB \bx = \bx = \mD\mB \bx+\mDc \mB \bx,
\end{equation}
we need to guarantee that  $\mD\mB \bx \neq\b0$. Let us define now the $|\S| \times |\F|$ matrix $\mG$ as
\begin{equation}
\mG=
\left(
\begin{array}{llll}
u_{i_1}(j_1) & u_{i_2}(j_1) & \cdots & u_{i_{\tiny|\F|}}(j_1)\nonumber\\
\vdots & \vdots & \vdots & \vdots \nonumber\\
u_{i_1}(j_{\tiny|\S|}) & u_{i_2}(j_{\tiny|\S|}) & \cdots & u_{i_{\tiny|\F|}}(j_{\tiny|\S|})
\end{array}
\right)
\end{equation}
whose $\ell$-th column is the eigenvector of index $i_{\ell}$ of the Laplacian matrix (or any orthonormal set of basis vectors), sampled at the positions indicated by the indices $j_1, \ldots, j_{\tiny|\S|}$. It is easy to understand how condition (\ref{|DcB|<1}) is equivalent to require $\mG$ to be full column rank. Of course, a necessary condition for the existence of a non trivial vector $\bx$ satisfying $\mD\mB\bx \neq \b0$, and then enabling sampling theorem, is that $|\S| \ge |\F|$. However, this condition is not sufficient, because $\mG$ may loose rank, depending on graph topology and samples' location. As an extreme case,
if the graph is not connected, the vertices can be labeled so that the Laplacian (adjacency) matrix can be written as a block diagonal matrix, with a number of blocks equal to the number of connected components. Correspondingly, each eigenvector of $\mL$ (or $\mA$) can be expressed as a vector having all zero elements, except the entries corresponding to the connected component, which that eigenvector is associated to. This implies that, if there are no samples over the vertices corresponding to the non-null entries of the eigenvectors with index included in $\F$, $\mG$ looses rank. In principle, a signal defined over a disconnected graph can still be reconstructed from its samples, but only provided that the number of samples belonging to each connected component is at least equal to the number of eigenvectors with indices in ${\F}$ associated to that component. More generally, even if the graph is connected, there may easily occur situations where matrix $\mG$ is not rank-deficient, but it is ill-conditioned, depending on graph topology and samples' location. This suggests that the location of samples plays a key role in the performance of the reconstruction algorithm, as we will show in the next section.

\section{Sampling of noisy signal}
\label{sec::Sampling of noisy signal}
To assess the effect of ill-conditioning of matrix $\bG$ or, equivalently, the possibility for the spectral norm of $\mB \mDc$ to be very close to one on the reconstruction algorithms, we consider the reconstruction of band-limited signals from noisy samples. The observation model is
\begin{equation}
\label{r=D(s+n)}
\br = \mD \left( \bs + \bn \right),
\end{equation}
where $\bn $ is a noise vector. Applying (\ref{eq::sampling_theorem_formula}) to $\br$, the reconstructed signal $\widetilde{\bs}$ is
\begin{equation}
\label{eq::signal_plus_noise_expansion}
\widetilde{\bs} = \sum_{i=1}^{\abs{\F}} \frac{1}{\sigma_i^2} \langle \mD \bs, \bpsi_i \rangle \bpsi_i + \sum_{i=1}^{\abs{\F}} \frac{1}{\sigma_i^2} \langle \mD \bn, \bpsi_i \rangle \bpsi_i.
\end{equation}
Exploiting the orthonormality of $\bpsi_i$, the mean square error is
\begin{align}
\label{eq::expected_value_noise}
MSE &= \mathbb{E}\left\{\| \widetilde{\bs}-\bs \|_2^2 \right\} =\mathbb{E}\left\{ \sum_{i=1}^{\abs{\F}} \frac{1}{\sigma_i^4} \abs{\langle \mD \bn, \bpsi_i \rangle }^2 \right\} \nonumber \\
&=  \sum_{i=1}^{\abs{\F}} \frac{1}{\sigma_i^4}   \bpsi_i^H\mD \mathbb{E}\left\lbrace \bn \bn^H \right\rbrace \mD \bpsi_i.
\end{align}
In case of identically distributed uncorrelated noise, i.e. $\mathbb{E} \left\lbrace \bn \bn^H \right\rbrace=\beta_n^2 \mI$, we get
\begin{align}
\label{eq::mse_gaussian_uncorrelated}
MSE_G = \sum_{i=1}^{\abs{\F}} \frac{\beta^2_n}{\sigma_i^4}  \trace \left( \bpsi_i^H \mD \bpsi_i \right) = \beta^2_n \sum_{i=1}^{\abs{\F}} \frac{1}{\sigma_i^2}.
\end{align}
Since the non-null singular values of the Moore-Penrose left pseudo-inverse $\left( \mB \mD \right)^+$ are the inverses of singular values of $\mB \mD$, expression (\ref{eq::mse_gaussian_uncorrelated}) can be rewritten as:
\begin{equation}
\label{eq::mse_frobenius}
MSE_G = \beta^2_n \, \| \left(\mB \mD \mB \right)^+ \|_F.
\end{equation}
Then, a possible optimal sampling strategy consists in selecting the vertices that minimize the mean square error in (\ref{eq::mse_frobenius}).

\subsection{Sampling strategies}
When sampling graph signals, besides choosing the right number of samples, whenever possible it is also fundamental to have a strategy indicating {\it where} to sample, as the samples' location plays a key role in the performance of reconstruction algorithms. Building on the analysis of signal reconstruction algorithms in the presence of noise carried out earlier in this section, a possible strategy is to select the location in order to minimize the MSE. From (\ref{eq::mse_frobenius}), taking into account that
\begin{equation}
\lambda_i \left( \mB \mD \mB \right) = \sigma^2_i \left( \mB \mD \right) = \sigma^2_i \left( \mSigma \mU^H \mD \right),
\end{equation}
the problem is equivalent to selecting the right columns of the matrix $ \mSigma \mU^H$ according to some optimization criterion. This is a combinatorial problem and in the following we will provide some numerically efficient, albeit sub-optimal, greedy algorithms to tackle the problem. We will then compare the performance with the benchmark case corresponding to the combinatorial solution. We will denote by $\tilde{\mU}$ the matrix whose rows are the first $\abs{\F}$ rows of $\mU^H$; the symbol $\tilde{\mU}_{\A}$ denotes the matrix formed with the columns of $\tilde{\mU}$ belonging to set $\A$. The goal is to find the sampling set $\S$, which amounts to selecting the best, in some optimal sense, $|\S|$ columns of $\tilde{\mU}$.


\noindent {\it Greedy Selection - Minimization of Frobenius norm of $\left( \mSigma \mU^H \mD \right)^+$}: This strategy aims at minimizing the MSE in (\ref{eq::mse_frobenius}), assuming the presence of uncorrelated noise. We propose a greedy approach to tackle this selection problem. The resulting strategy is summarized in Algorithm 1.

\begin{algorithm}[h]
$\textit{Input Data}:$ $\tilde{\mU}$, the first $\abs{\F}$ rows of $\mU^H$;

\hspace{1.1cm} $\qquad M$, the number of samples.

$\textit{Output Data}:$ $\S$, the sampling set. \smallskip

$\textit{Function}:$ \hspace{.23cm} initialize $\S\equiv \emptyset$

\hspace{2 cm} while $|\S|<M$

\hspace{2.3cm} $\displaystyle s=\arg \min_j \;\;\sum_{i=1}^{\abs{\F}} \frac{1}{\sigma_i^2(\tilde{\mU}_{\S\cup\{j\}})}$;

\hspace{2.3cm} $\S \leftarrow \S \cup \{s\}$;

\hspace{2cm} end

\noindent \caption{\label{alg:Greedy1}\textbf{: Greedy selection based on minimum Frobenius norm of $\left( \mSigma \mU^H \mD \right)^+$}}
\end{algorithm}

\noindent {\it Greedy Selection - Maximization of the volume of the parallelepiped formed with the columns of $\tilde{\mU}$:} In this case, the strategy aims at selecting the set $\mathcal{S}$ of columns of the matrix $\tilde{\mU}$ that maximize the (squared) volume of the parallelepiped built with the selected columns of $\tilde{\mU}$ in $\mathcal{S}$. This volume can be computed as the determinant of the matrix $\tilde{\mU}_\mathcal{S}^H \tilde{\mU}_\mathcal{S}$, i.e. $|\tilde{\mU}_\mathcal{S}^H \tilde{\mU}_\mathcal{S}|$. The rationale underlying this approach is not only to choose the columns with largest norm, but also the vectors more orthogonal to each other.
Also in this case, we propose a greedy approach, as described in Algorithm 2. The algorithm starts including the column with the largest norm in $\tilde{\mU}$, and then it adds, iteratively, the columns having the largest norm and, at the same time, as orthogonal as possible to the vectors already in $\S$.

\begin{algorithm}[h]
$\textit{Input Data}:$ $\tilde{\mU}$, the first $\abs{\F}$ rows of $\mU^H$;

\hspace{1.1cm} $\qquad M$, the number of samples.

$\textit{Output Data}:$ $\S$, the sampling set. \smallskip

$\textit{Function}:$ \hspace{.23cm} initialize $\S\equiv \emptyset$

\hspace{2 cm} while $|\S|<M$

\hspace{2.3cm} $\displaystyle s=\arg \max_j \;\; \left|\tilde{\mU}_{\S\cup\{j\}}^H \tilde{\mU}_{\S\cup\{j\}}\right|$;

\hspace{2.3cm} $\S \leftarrow \S \cup \{s\}$;

\hspace{2cm} end

\protect\caption{\label{alg:Greedy2}\textbf{: Greedy selection based on maximum parallelepiped volume}}
\end{algorithm}

%
%
%
%
%
%
%
%
%
%
%
%
%
%
%
%
%
\noindent {\it Greedy Selection - Maximization of the Frobenius norm of $\mSigma \mU^H \mD$:} Finally, we propose a strategy that aims at selecting the columns of the matrix $\tilde{\mU}$ that maximize its Frobenius norm. Even if this strategy is not directly related to the optimization of the MSE in (\ref{eq::mse_gaussian_uncorrelated}), it leads to a very simple implementation. Although clearly sub-optimal, this choice  will be later shown to provide fairly good performance if the number of samples is sufficiently larger than the theoretical limit. The method works as follows:
\begin{equation}
\max_\S \;\|\tilde{\mU}\mD\|_2^2\;=\; \max_\S \;\sum_{i \in \S} \|(\tilde{\mU})_i\|^2_2.
\end{equation}
The optimal selection strategy simply consists in selecting the $M$ columns from $\tilde{\mU}$ with largest $\ell_2$-norm. \smallskip

\subsection{Numerical Results}

\begin{figure*}[htp]
\centering
\centering
\begin{subfigure}[t]{0.45\linewidth}
\includegraphics[width=\columnwidth,keepaspectratio]{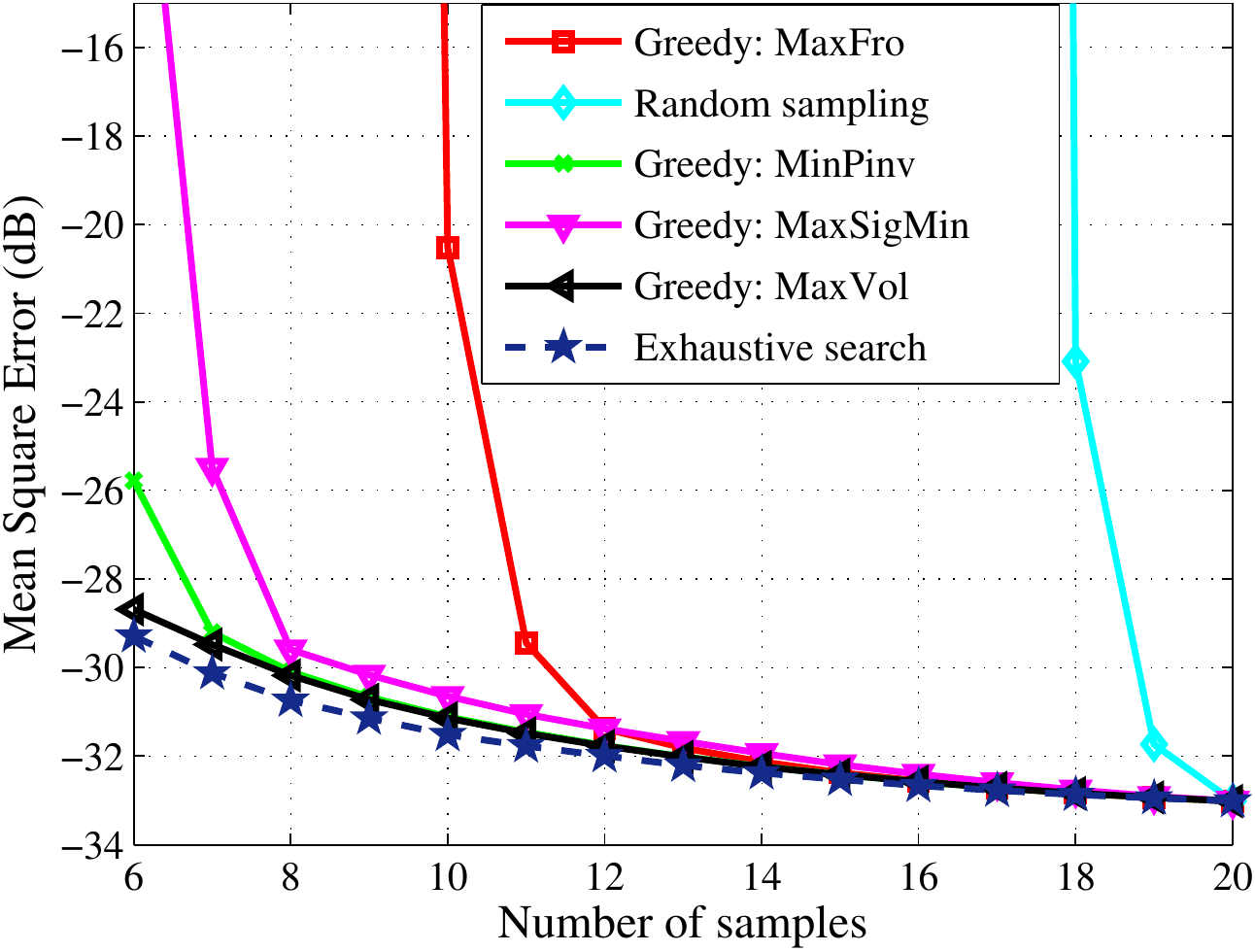}
\caption{}
\end{subfigure}
\hspace{.5cm}
\begin{subfigure}[t]{0.45\linewidth}
\includegraphics[width=\columnwidth,keepaspectratio]{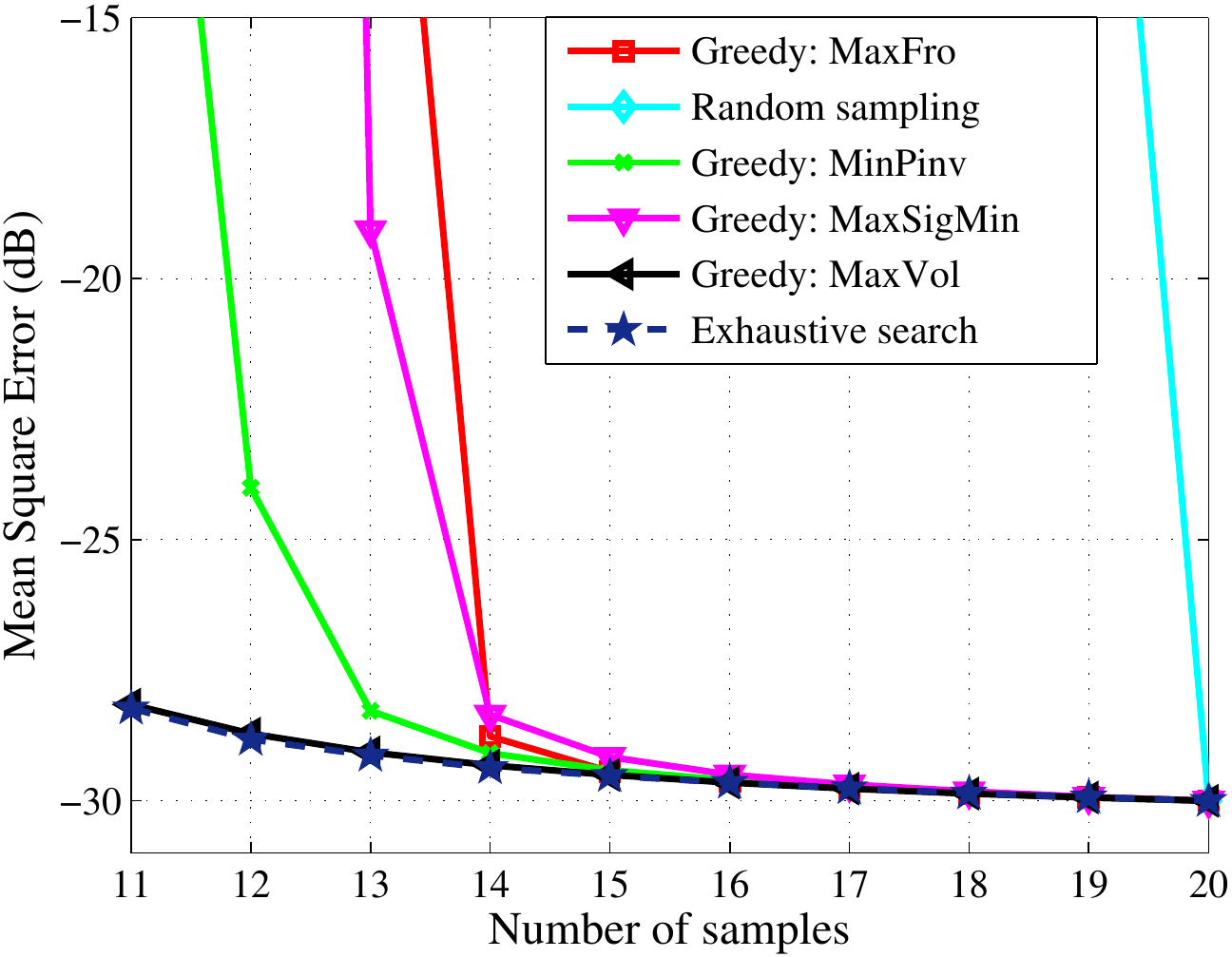}
\caption{}
\end{subfigure}
\caption{Behavior of Mean Squared Error versus number of samples, for different sampling strategies for a random geometric graph topology. (a) for the case of $\abs{\F} = 5$, whereas (b) for the case $\abs{\F} = 10$.}
\label{fig::Sampling_SF}
\end{figure*}

We compare now the performance obtained with the proposed sampling strategies, with random sampling and with the strategy recently  proposed in \cite{chen2015discrete} aimed at maximizing the minimum singular value of $\mSigma \mU^H \mD$. We consider, as an example, a random geometric graph model, with $N=20$ nodes randomly distributed over a unitary area, and having covering radius equal to 0.34. The corresponding results are shown in Fig. \ref{fig::Sampling_SF}, which reports the behavior of the mean square error (MSE) in (\ref{eq::expected_value_noise}) versus the number of samples. We consider band-limited signals with two different bandwidths $\abs{\F} = 5$ and $\abs{\F} = 10$. The observation model is given by (\ref{r=D(s+n)}) where the additive noise is generated as an uncorrelated, zero mean Gaussian random vector.  The results shown in the figures have been obtained by averaging over $500$ independent realizations of graph topologies. We compare five different sampling strategies, namely: (i) the random strategy, which picks up  nodes randomly; (ii) the greedy selection method of Algorithm 1, minimizing the Frobenius norm of $\left( \mSigma \mU^H \mD \right)^+$ (MinPinv); (iii) the greedy appraoch that maximizes the Frobenius norm of $\mSigma \mU^H \mD$ (MaxFro); (iv) the greedy selection method of Algorithm 2, that maximizes the volume of the parallelepiped built from the selected vectors (MaxVol);  and (v) the greedy algorithm  (MaxSigMin) maximizing the minimum singular value of $\mSigma \mU^H \mD$, recently proposed in \cite{chen2015discrete}. The performance of the globally optimal strategy obtained through an exhaustive search over all possible selections is also reported as a benchmark.

From Fig. \ref{fig::Sampling_SF} we observe that, as expected, as the number of samples increases, the mean squared error decreases. As a general remark, we can notice how random sampling can perform quite poorly. This poor results of random sampling emphasizes that, when sampling a graph signal, what matters is not only the number of samples, but also (and most important) {\it where} the samples are taken. Furthermore, we can notice how the proposed MaxVol strategy largely outperforms all the other strategies, while showing performance very close to the optimal combinatorial benchmark. Comparing the proposed MaxVol and MinPinv methods with the MaxSigMin approach, we see that the performance gain increases as the bandwidth increases. This happens because, as the bandwidth increases, more and more modes play a role in the final MSE, as opposed to the single mode associated to the minimum singular value.


\balance
\section{Conclusions}
In conclusion, in this paper we derived an uncertainty principle for graph signals, where the boundary of the admissible energy concentration in vertex and transformed domain is expressed in closed form. Then, we established a link between localization properties and sampling theory. Finally, we proposed a few alternative sampling strategies, based on greedy approaches, aimed to strike a good trade-off between performance and complexity. We compared the performance of the proposed methods with a very recently proposed method and with the globally optimal combinatorial search. Although sub-optimal, the MaxVol method exhibits very good performance, with only small losses with respect to the combinatorial search, at least in the case of random graph analyzed in this paper. Further investigations are taking place to assess the performance over different classes of graphs.

\bibliographystyle{MyIEEE}
\bibliography{refs}

\begin{thebibliography}{10}
\providecommand{\url}[1]{#1}
\csname url@samestyle\endcsname
\providecommand{\newblock}{\relax}
\providecommand{\bibinfo}[2]{#2}
\providecommand{\BIBentrySTDinterwordspacing}{\spaceskip=0pt\relax}
\providecommand{\BIBentryALTinterwordstretchfactor}{4}
\providecommand{\BIBentryALTinterwordspacing}{\spaceskip=\fontdimen2\font plus
\BIBentryALTinterwordstretchfactor\fontdimen3\font minus
  \fontdimen4\font\relax}
\providecommand{\BIBforeignlanguage}[2]{{%
\expandafter\ifx\csname l@#1\endcsname\relax
\typeout{** WARNING: IEEEtran.bst: No hyphenation pattern has been}%
\typeout{** loaded for the language `#1'. Using the pattern for}%
\typeout{** the default language instead.}%
\else
\language=\csname l@#1\endcsname
\fi
#2}}
\providecommand{\BIBdecl}{\relax}
\BIBdecl

\bibitem{shuman2013emerging}
D.~I. Shuman, S.~K. Narang, P.~Frossard, A.~Ortega, and P.~Vandergheynst, ``The
  emerging field of signal processing on graphs: Extending high-dimensional
  data analysis to networks and other irregular domains,'' \emph{IEEE Signal
  Proc. Mag.}, vol.~30, no.~3, pp. 83--98, 2013.

\bibitem{sandryhaila2014big}
A.~Sandryhaila and J.~M.~F. Moura, ``Big data analysis with signal processing
  on graphs: Representation and processing of massive data sets with irregular
  structure,'' \emph{IEEE Signal Proc. Mag.}, vol.~31, no.~5, pp. 80--90, 2014.

\bibitem{sandryhaila2013discrete}
------, ``Discrete signal processing on graphs,'' \emph{IEEE Trans. on Signal
  Proc.}, vol.~61, pp. 1644--1656, 2013.

\bibitem{zhu2012approximating}
X.~Zhu and M.~Rabbat, ``Approximating signals supported on graphs,'' in
  \emph{IEEE Int. Conf. on Acoustics, Speech and Signal Processing (ICASSP)},
  March 2012, pp. 3921--3924.

\bibitem{chen2015discrete}
S.~Chen, R.~Varma, A.~Sandryhaila, and J.~Kova{\v c}evi{\'c}, ``Discrete signal
  processing on graphs: Sampling theory,'' \emph{IEEE Trans. on Signal Proc.},
  vol.~63, pp. 6510--6523, Dec. 2015.

\bibitem{agaskar2013spectral}
A.~Agaskar and Y.~M. Lu, ``A spectral graph uncertainty principle,'' \emph{IEEE
  Trans. on Inform. Theory}, vol.~59, no.~7, pp. 4338--4356, 2013.

\bibitem{pasdeloup2015toward}
B.~Pasdeloup, R.~Alami, V.~Gripon, and M.~Rabbat, ``Toward an uncertainty
  principle for weighted graphs,'' \emph{arXiv preprint arXiv:1503.03291},
  2015.

\bibitem{benedettograph}
\BIBentryALTinterwordspacing
J.~J. Benedetto and P.~J. Koprowski, ``Graph theoretic uncertainty
  principles,''
  \emph{http://www.math.umd.edu/~jjb/graph\_theoretic\_UP\_April\_14.pdf},
  2015.
\BIBentrySTDinterwordspacing

\bibitem{koprowski2015finite}
\BIBentryALTinterwordspacing
P.~J. Koprowski, ``Finite frames and graph theoretic uncertainty principles,''
  Ph.D. dissertation, 2015. [Online]. Available:
  \url{http://hdl.handle.net/1903/16666}
\BIBentrySTDinterwordspacing

\bibitem{pesenson2008sampling}
I.~Z. Pesenson, ``Sampling in {Paley-Wiener} spaces on combinatorial graphs,''
  \emph{Trans. of the American Mathematical Society}, vol. 360, no.~10, pp.
  5603--5627, 2008.

\bibitem{narang2013signal}
S.~Narang, A.~Gadde, and A.~Ortega, ``Signal processing techniques for
  interpolation in graph structured data,'' in \emph{IEEE International
  Conference on Acoustics, Speech and Signal Processing (ICASSP)}, May 2013,
  pp. 5445--5449.

\bibitem{chenicassp2015}
S.~Chen, R.~Varma, A.~Sandryhaila, and J.~Kova{\v c}evi{\'c}, ``Sampling theory
  for graph signals,'' in \emph{2015 IEEE Int. Conf. on Acoustics, Speech and
  Signal Processing (ICASSP 2015)}, Apr. 2015, pp. 3392--3396.

\bibitem{TsitsveroEusipco15}
M.~Tsitsvero and S.~Barbarossa, ``On the degrees of freedom of signals on
  graphs,'' in \emph{2015 European Signal Proc. Conf. (Eusipco 2015)}, Sep.
  2015, pp. 1521--1525.

\bibitem{wang2014local}
X.~Wang, P.~Liu, and Y.~Gu, ``Local-set-based graph signal reconstruction,''
  \emph{IEEE Trans. on Signal Processing}, vol.~63, no.~9, pp. 2432--2444,
  2015.

\bibitem{marquez2015}
A.~G. Marquez, S.~Segarra, G.~Leus, and A.~Ribeiro, ``Sampling of graph signals
  with successive local aggregations,'' \emph{IEEE Trans. Signal Process.;
  available at http://arxiv.org/abs/1504.04687}, 2015.

\bibitem{Slepian:1961:PSW}
\BIBentryALTinterwordspacing
D.~Slepian and H.~O. Pollak, ``Prolate spheroidal wave functions, {Fourier}
  analysis and uncertainty -- {I},'' \emph{The Bell System Techn. Journal},
  vol.~40, no.~1, pp. 43--63, Jan. 1961.
\BIBentrySTDinterwordspacing

\bibitem{landau1961prolate}
H.~J. Landau and H.~O. Pollak, ``Prolate spheroidal wave functions, fourier
  analysis and uncertainty -- {II},'' \emph{Bell System Technical Journal},
  vol.~40, no.~1, pp. 65--84, 1961.

\bibitem{tsitsvero2015signals}
M.~Tsitsvero, S.~Barbarossa, and P.~Di~Lorenzo, ``Signals on graphs:
  Uncertainty principle and sampling,'' \emph{submitted to IEEE Trans. on
  Signal Processing (July 2015); available at http://arxiv.org/abs/1507.08822},
  2015.

\bibitem{Folland1997}
G.~B. Folland and A.~Sitaram, ``The uncertainty principle: A mathematical
  survey,'' 1997, pp. 207--238.

\end{thebibliography}

\end{document}